\documentclass[10pt, conference, compsocconf]{IEEEtran}
%

\usepackage{colordvi,amssymb,amsmath,graphicx}
\usepackage{subfigure}

\newtheorem{theorem}{Theorem}[section]
\newtheorem{lemma}[theorem]{Lemma}
\newtheorem{corollary}[theorem]{Corollary}
\newtheorem{proposition}[theorem]{Proposition}
\newtheorem{definition}[theorem]{Definition}
\newtheorem{example}[theorem]{Example}
\newtheorem{proof}[theorem]{Proof}
\newtheorem{remark}[theorem]{Remark}

\newcommand{\ba}{{\bf a}}
\newcommand{\bb}{{\bf b}}
\newcommand{\bd}{{\bf d}}
\newcommand{\calA}{{\mathcal A}}

\newcommand{\R}{{\mathbb R}}

\newcommand{\Z}{{\mathbb Z}}

\newcommand{\demph}[1]{\emph{\Blue{#1}}}



\begin{document}
%
\title{Injectivity of 2D Toric B\'{e}zier Patches}


\author{\IEEEauthorblockN{Frank Sottile}
\IEEEauthorblockA{Department of Mathematics\\
Texas A\&M University\\
College Station, TX 77843, USA \\
Email: sottile@math.tamu.edu}
\and
\IEEEauthorblockN{Chun-Gang Zhu}
\IEEEauthorblockA{School of Mathematical Sciences\\
Dalian University of Technology\\
Dalian 116024, China\\
 Email: cgzhu@dlut.edu.cn}

}


\maketitle

\begin{abstract}
Rational B\'{e}zier functions are widely used as mapping functions in
surface reparameterization, finite element analysis, image warping
and morphing.
The injectivity (one-to-one property) of a mapping
function is typically necessary for these applications.
Toric B\'{e}zier patches are generalizations of classical patches
 (triangular, tensor product)
which are defined on the convex hull of a set of integer lattice
points. We give a geometric condition on the control
points that we show is equivalent to the injectivity of
every 2D toric B\'{e}zier patch with those control points
for all possible choices of weights.
This condition refines that of Craciun, et al.,
which only implied injectivity on the interior of a patch.

\end{abstract}


\begin{keywords}
B\'ezier patches; toric patches; injectivity; mapping
\end{keywords}

%
\IEEEpeerreviewmaketitle

\section{Introduction}
Mapping functions play
an important role in computer graphics, computer aided geometric
design (CAGD), finite element analysis (FEA) and some related areas.
The injectivity of mapping functions, that is, the absence of
self-intersection, is crucial in image warping and
morphing~\cite{Wol90}, free form deformation~\cite{Cho00}, surface
reparameterization, and so on.
Many authors have investigated conditions which imply injectivity.
Goodman and Unsworth~\cite{Goo96}
proposed a sufficient condition for the injectivity of a 2D B\'ezier
function.
For the control points of a $m\times n$ tensor product patch, their
condition involves $2m(m+1)+2n(n+1)$
linear inequalities. For image
morphing, Choi and Lee \cite{Cho00} presented a sufficient condition
for the injectivity of 2D and 3D uniform cubic B-spline functions.
Their condition provides a single bound for the displacements of
control points that guarantees the injectivity of the cubic B-spline
function.
Floater~\cite{Flo03} studies a sufficient condition for injectivity
of convex combination mappings over triangulations.

Fig.~\ref{fig:1} displays rational plane cubic B\'ezier curves with
their control polygons (bold lines). The curve in Fig.~\ref{fig:11}
has no points of self-intersection. The
 curve in Fig.~\ref{fig:12} has one point of self-intersection, which
may be removed by varying the weights  as shown in
Fig.~\ref{fig:13}. The control polygon of the first curve is in
convex position, so there are no positive weights for which the
resulting B\'ezier curve has self-intersection. For the other
control polygon there are weights (e.g.~Fig.~\ref{fig:12}) such that
the resulting B\'ezier curve has a point of self-intersection. The
cited works  provide conditions which imply no self-intersection.
Our purpose is different: We give conditions on the control points
for 2D patches which are equivalent to there being no
self-intersection for any choice of positive weights.

\begin{figure*}[htb]
\centerline {\subfigure[]{\label{fig:11}
\begin{picture}(115,74)
    \put(12,3){\includegraphics[height=65pt]{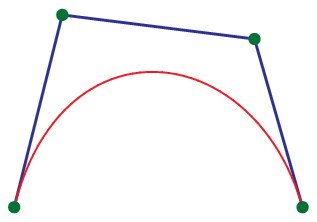}}
    \put(12,67){$\bb_1$}   \put(86,62){$\bb_2$}
    \put(0,0){$\bb_0$}    \put(105, 0){$\bb_3$}
   \end{picture}}\hfil
\subfigure[]{\label{fig:12}
\begin{picture}(115,74)
    \put(12,3){\includegraphics[height=65pt]{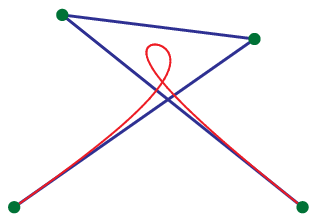}}
    \put(12,67){$\bb_2$}   \put(86,62){$\bb_1$}
    \put(0,0){$\bb_0$}    \put(105, 0){$\bb_3$}
   \end{picture}}\hfil
\subfigure[]{\label{fig:13}
\begin{picture}(115,74)
    \put(12,3){\includegraphics[height=65pt]{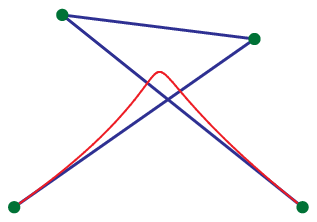}}
    \put(12,67){$\bb_2$}   \put(86,62){$\bb_1$}
    \put(0,0){$\bb_0$}    \put(105, 0){$\bb_3$}
   \end{picture}}}
\centering\caption{Cubic B\'ezier curves.}\label{fig:1}
\end{figure*}

The basic units in the geometric modeling of surfaces are
rational B\'ezier simplices and tensor product patches.
Krasauskas~\cite{KRA02} introduced toric B\'{e}zier
patches as a natural
extension of classical rational patches and their higher-dimensional generalizations, the
B\'{e}zier simploids by DeRose, et al.~\cite{DeRose}. The theory of
toric patches is based upon real toric varieties from algebraic
geometry \cite{Sot03}, and they provide a general framework in which to pose
many questions concerning classical rational patches.

To study dynamical systems arising from chemical reaction networks,
Craciun et al.~\cite{Cra05} prove an injectivity theorem for certain
maps. This was adapted in~\cite{CRA10} to give a geometric condition on a set of
control points which implies that the resulting toric B\'ezier patch
has no self-intersection, for any choice of positive weights.
That result
contains a minor flaw in that it only guarantees
injectivity in the interior of a patch.
We correct that flaw, at least for 2D patches, showing
that the condition from~\cite{CRA10} plus the mild
additional hypothesis that the vertices correspond to distinct control points
is equivalent to injectivity for every choice of positive weights.

In Section 2, we introduce toric B\'ezier patches as generalizations of the classical
rational patches.
In Section 3 we explain our condition and sketch its equivalence to the injectivity
of every 2D patch with a given set of control points, for all possible weights.
More details, including examples of the geometric arguments of Lemma~\ref{L:halfspaces}
and Corollaries~\ref{C:chull} and~\ref{C:edge} will be added in the complete version of
this paper.
We conclude some remarks on how to check this condition, argue that it is in
fact quite natural, and interpret it in terms of piecewise linear maps.

While our main interest is in establishing a criteria valid in
3D, and in fact in all dimensions,
we currently do not know how to add hypotheses to the
condition of~\cite{CRA10} so that the result will be equivalent to injectivity for any
choice of weights in 3D.

\section{Toric B\'{e}zier patches}

 Let $\calA\subset\Z^2$ be any finite set of integer \demph{lattice points}.
 Its \demph{convex hull}  $\Delta_\calA$ is a polygon
 whose vertices are lattice points.
 This polygon is also defined by its \demph{edge inequalities},
\[
  \Delta_\calA=\{(x,y)\in\R^2 \mid 0 \leq h_i(x,y)\,, i=1,\dotsc,\ell\},
\]
 where $h_i(x)=a_ix+b_iy+c_i$ are linear polynomials with integer
 coefficients and $(a_i,b_i)$ is relatively
 prime.

 For each integer lattice point $\ba\in{\calA}$, Krasauskas \cite{KRA02} defined the
\demph{toric Bernstein polynomial}
 \begin{equation}\label{eq:toric}
   \beta_\ba(x)\ :=\ h_1(x)^{h_1(\ba)}h_2(x)^{h_2(\ba)}\dotsb h_\ell(x)^{h_\ell(\ba)},
 \end{equation}
 These toric Bernstein polynomials are non-negative on $\Delta_{\calA}$, and the
 collection of all $\beta_\ba$ has no common zeroes in $\Delta_\calA$.

 Let $\R_{>}^\calA$ be $\R_{>}^{|\calA|}$ with coordinates
$(w_\ba\in\R_>\mid\ba\in\calA)$ indexed by elements of $\calA$.

\begin{definition}\label{def:toric}
 Let $\calA\subset\Z^2$ be a finite set.
 A toric B\'ezier patch associated with $\calA$ requires an assignment
 $f\colon\calA\to\R^d$  ($d=2,3$) of \demph{control points} and a choice of
 weights $w\in\R_>^\calA$.
 The \demph{toric B\'ezier patch} $F_w\colon\Delta_\calA\to \R^d$ is the function
 \begin{equation}\label{eq:tbp}
   F_w(x)\ =\ F_{\calA,f,w}(x)\ :=\
   \frac{\sum_{\ba\in{\calA}} w_\ba f(\ba)\beta_\ba(x) }
        {\sum_{\ba\in{\calA}} w_\ba\beta_\ba(x)}\,,
 \end{equation}
 written $F_w$ as $\calA$ and $f$ are understood.
\end{definition}

The degree of a toric B\'ezier patch is encoded in its domain, differing from the
classical patches as developed in~\cite{farin}.
These two types of patches share many properties,
which is explained in~\cite{KRA02,Sot03}.
Two properties in particular are important for us.

One is the convex hull property, that the image of $\Delta_\calA$
under $F_w$ is contained in the convex hull of the control points
$f(\calA)$ with $F_w(\bb)=f(\bb)$ if $\bb$ is a vertex of
$\Delta_\calA$, and the other is the boundary property, that the
restriction of $F_w$ to an edge $\delta$ of $\Delta_\calA$ is a
rational B\'ezier curve, defined by control points and weights
corresponding to lattice points of $\delta$.

The boundary property may be seen directly by considering
the restriction to an edge.
For the convex hull property, note that as $w_\ba\beta_{\ba}(x)$ is nonnegative,
$F_w(x)$ is a convex combination of the control points,
and if $\bb$ is a vertex, then $\beta_\ba(\bb)$ is zero unless $\ba=\bb$.
Since the toric Bernstein polynomials are strictly positive on the interior of $\Delta$
(and those corresponding to an edge $\delta$ are strictly positive on the interior of
$\delta$), we may deduce a little more.

\begin{proposition}
 The image of the interior of $\Delta$ lies strictly in the interior of the convex hull of
 the control points $f(\calA)$, and the image of the interior of an edge $\delta$ lies
 strictly within the interior of the convex hull of $f(\delta\cap\calA)$.
\end{proposition}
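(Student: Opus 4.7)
The plan is to leverage the convex-combination structure of $F_w$ together with the strict positivity of the toric Bernstein polynomials described in the paragraph preceding the statement. Define
\[
  \lambda_\ba(x)\ :=\ \frac{w_\ba \beta_\ba(x)}{\sum_{\bb\in\calA} w_\bb \beta_\bb(x)}\,,
\]
so that $\lambda_\ba(x)\geq 0$, $\sum_\ba \lambda_\ba(x)=1$, and $F_w(x)=\sum_\ba \lambda_\ba(x)f(\ba)$ exhibits $F_w(x)$ as a convex combination of the control points. The task then reduces to showing that enough of the $\lambda_\ba(x)$ are strictly positive.

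For the first assertion, I would note that the topological interior of $\Delta_\calA$ is exactly the set where every defining inequality is strict, i.e.\ $h_i(x)>0$ for all $i=1,\dotsc,\ell$. Since the exponents $h_i(\ba)$ are nonnegative integers, the product in~\eqref{eq:toric} defining $\beta_\ba(x)$ is strictly positive for every $\ba\in\calA$, whence every $\lambda_\ba(x)>0$. Thus $F_w(x)$ is a \emph{strict} convex combination of $f(\calA)$, and a standard fact from convex geometry then places $F_w(x)$ in the relative interior of $\conv(f(\calA))$.

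For the edge statement, fix an edge $\delta$ cut out by $h_j=0$ and let $x$ lie in the interior of $\delta$, so that $h_j(x)=0$ but $h_i(x)>0$ for $i\neq j$. For $\ba\in\calA\setminus\delta$ one has $h_j(\ba)>0$, making $\beta_\ba(x)=0$ and hence $\lambda_\ba(x)=0$. For $\ba\in\delta\cap\calA$ one has $h_j(\ba)=0$, so
\[
  \beta_\ba(x)\ =\ \prod_{i\neq j} h_i(x)^{h_i(\ba)}\ >\ 0\,,
\]
and therefore $\lambda_\ba(x)>0$. This expresses $F_w(x)$ as a strict convex combination of the subset $f(\delta\cap\calA)$, and the same convex-geometry fact places the image in the relative interior of $\conv(f(\delta\cap\calA))$.

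No step is really an obstacle; the whole argument is a direct unpacking of the positivity structure of the toric Bernstein polynomials, as already noted in the remarks preceding the proposition. The only subtle point to flag for the reader is the interpretation of ``interior'' on the target side: if the control points $f(\calA)$ happen to be collinear (or if $f(\delta\cap\calA)$ degenerates to a single point), the relevant notion is the relative interior of the affine span of the control points, and the convex-geometry lemma about strict convex combinations applies in exactly that generality.
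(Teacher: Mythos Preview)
Your proposal is correct and follows essentially the same approach as the paper: the paper's argument is entirely contained in the paragraph preceding the proposition, which notes that the toric Bernstein polynomials are strictly positive on the interior of $\Delta_\calA$ (and those for an edge on the interior of that edge), so that $F_w(x)$ is a strict convex combination of the relevant control points. Your write-up simply makes this explicit by introducing the barycentric coefficients $\lambda_\ba(x)$ and spelling out why each is positive; your closing remark that ``interior'' should be read as \emph{relative} interior when the control points are affinely degenerate is a useful clarification that the paper leaves implicit.
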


Toric B\'ezier patches include the classical B\'ezier patches and
some multi-sided patches such as Warren's
polygonal surface \cite{War92} which is a reparameterized toric
B\'ezier surface.

\begin{example}[Tensor product patches]\label{ex:tensor}
Let $m,n$ be positive integers. Let $\calA$ be the integer points in
the $m\times n$ rectangle $\calA:=\{(i,j) :  0\leq i\leq m,\
0\leq j\leq n\}$. Then the corresponding toric
Bernstein polynomials~\eqref{eq:toric} are
 \begin{equation}
   \beta_{(i,j)}(x,y)\ :=\   x^i(m-x)^{m-i} y^j(n-y)^{n-j}\,,
 \end{equation}
and the toric B\'ezier patch \eqref{eq:tbp} (with weights
$w_{i,j}=\binom{m}{i}\binom{n}{j}$) is the rational
tensor product B\'ezier patch of bidegree $(m,n)$ after the
simple reparameterization $s=x/m$, $t=y/n$.
\end{example}

\begin{example}[Triangular B\'ezier patches]\label{ex:triangle}
Let $m$ be a positive integer and $\calA$ be the integer points in the
triangle with vertices $(0,0)$, $(m,0)$, and $(0,m)$,
$\calA:=\{(i,j)\mid 0\leq i,j, \quad 0\leq  m-i,j\}$.
The corresponding Bernstein polynomials~\eqref{eq:toric} are
\[
   \beta_{i,j}(x,y)\ =\  x^i y^j (m-x-y)^{m-i-j}.
\]
Then the toric B\'ezier patch~\eqref{eq:tbp} (with weights
$w_{i,j}=\frac{m!}{i!j!(m-i-j)!}$) is the rational
B\'ezier triangle of degree $m$ after the simple reparameterization $s=x/m$, $t=y/m$.
\end{example}

\section{Injectivity of 2D toric B\'{e}zier patches}
Given a finite set $\calA\subset\Z^2$  and a choice
$f\colon\calA\to\R^2$ of control points,
we consider the injectivity of toric
B\'ezier patches as mapping functions
$F_w\colon \Delta_{\calA}\mapsto \R^2$~\eqref{eq:tbp}, for all choices
$w\in\R^\calA_>$ of positive weights.

Affinely independent  points $\ba_0,\ba_1,\ba_2$
determine an orientation via the ordered basis
$\ba_1{-}\ba_0,\ba_2{-}\ba_0$ of $\R^2$.

\begin{definition}\label{def:co}
 A choice $f\colon\calA\to\R^2$ of control points is \demph{weakly compatible} if
\begin{enumerate}
 \item There are affinely independent points $\ba_0,\ba_1,\ba_2$ of $\calA$ such that
      $f(\ba_0),f(\ba_1),f(\ba_2)$ is also affinely independent, and
 \item For any affinely independent points $\ba'_0,\ba'_1,\ba'_2$ of $\calA$ with the same
   orientation as $\ba_0,\ba_1,\ba_2$, if
   $f(\ba'_0),f(\ba'_1),f(\ba'_2)$ is also affinely independent, then it has the same
   orientation as $f(\ba_0),f(\ba_1),f(\ba_2)$.
 \end{enumerate}
\end{definition}

Fig.~\ref{fig:2} shows three sets of labeled points, indicating
assignments between them.
 \begin{figure}[htb]
  \begin{picture}(242,56)(-6,0)
   \put(0,2){\includegraphics{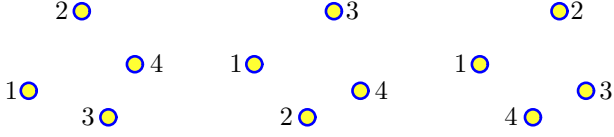}}
   \put(  1,17){$1$}   \put( 20,47){$2$}
   \put( 30, 7){$3$}   \put( 56,27){$4$}

   \put( 86,27){$1$}   \put(130,47){$3$}
   \put(105, 7){$2$}   \put(141,17){$4$}

   \put(171,27){$1$}   \put(215,47){$2$}
   \put(190, 7){$4$}   \put(226,17){$3$}
  \end{picture}
 \caption{Weak compatibility.}\label{fig:2}
\end{figure}
The assignment between the first two sets
is weakly compatible, but neither assignment to the third set is
weakly compatible.

We state Theorem~3.5 of~\cite{CRA10} for $\R^2$, which is their main
result on injectivity of toric B\'{e}zier functions (it holds in any
dimension). Write $\Delta^{\circ}_{\calA}$ for the interior of
$\Delta_{\calA}$.

\begin{theorem}\label{thm:clf-injectivity1}
 The map $F_w:\Delta^{\circ}_{\calA}\mapsto \R^2$ is injective for all
 $w\in\R^{\calA}_>$ if and only if the assignment $f\colon\calA\to\R^2$ is
 weakly compatible.
\end{theorem}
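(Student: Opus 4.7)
The plan is to analyse the Jacobian of $F_w$ on $\Delta_\calA^\circ$ and to exploit the toric structure of the domain to convert a pointwise sign condition into global injectivity. Starting from~\eqref{eq:tbp}, the logarithmic-derivative identity $\partial_j\beta_\ba=\beta_\ba\sum_i h_i(\ba)\,\partial_j h_i(x)/h_i(x)$ combined with the quotient rule lets one expand $\det DF_w(x)$, after clearing a strictly positive factor, as a sum over ordered triples $(\ba_0,\ba_1,\ba_2)\in\calA^3$ of terms of the shape
\[
 p_{\ba_0\ba_1\ba_2}(x)\cdot w_{\ba_0}w_{\ba_1}w_{\ba_2}\cdot D(\ba_0,\ba_1,\ba_2)\cdot D\bigl(f(\ba_0),f(\ba_1),f(\ba_2)\bigr)\,,
\]
where each $p_{\ba_0\ba_1\ba_2}(x)>0$ on $\Delta_\calA^\circ$ and $D(\cdot,\cdot,\cdot)$ denotes the signed area of the triangle spanned by its arguments. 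The key observation is that weak compatibility is exactly the condition that every product $D(\ba_0,\ba_1,\ba_2)\cdot D(f(\ba_0),f(\ba_1),f(\ba_2))$ is non-negative, with at least one strictly positive: clause~(1) of Definition~\ref{def:co} supplies the strictly positive term and clause~(2) forbids a cancelling term of opposite sign.

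For the sufficient direction I would deduce from this formula that $\det DF_w>0$ on $\Delta_\calA^\circ$ whenever $f$ is weakly compatible, and then pass to the toric logarithmic coordinates $y_i=\log h_i(x)$ in which $\Delta_\calA^\circ$ is carried diffeomorphically onto a convex open set, each $\beta_\ba$ becomes a product of exponentials of linear forms in $y$, and $F_w$ becomes a rational-exponential map whose Jacobian inherits the sign-coherent expansion above. Global injectivity then follows from the Craciun--Feinberg injectivity criterion of~\cite{Cra05}, which was designed for maps of exactly this form arising from chemical reaction networks.

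For the necessary direction I argue by contrapositive. If clause~(1) of Definition~\ref{def:co} fails, then $f(\calA)$ lies on a single line, so by the convex-hull property every $F_w$ maps the two-dimensional interior into that line and cannot be injective for any $w$. If clause~(1) holds but clause~(2) fails, there are two affinely independent triples of $\calA$ sharing an orientation whose images in $\R^2$ have opposite orientations; taking $w$ concentrated on the six lattice points involved (and vanishingly small on the others) yields two leading terms of opposite sign in the Jacobian expansion, and a continuity and perturbation argument then produces an interior point where $\det DF_w$ vanishes and, after further perturbation, two distinct interior points with the same image under $F_w$.

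The main obstacle is the sufficient direction: positivity of the Jacobian alone does not imply global injectivity on a convex domain (the exponential parameterisation of $\R^2\setminus\{0\}$ is the standard counterexample), so the argument must genuinely use the toric structure. The technical content lies in verifying that the logarithmic reparameterisation is a diffeomorphism onto a convex set and that the sign pattern produced by Definition~\ref{def:co} matches exactly the hypothesis of the injectivity criterion imported from~\cite{Cra05}.
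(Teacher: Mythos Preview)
The paper does not prove this theorem: it is quoted verbatim as Theorem~3.5 of~\cite{CRA10}, accompanied only by the one-line remark that ``their proof showed that $F_w$ has no critical points in the interior, which shows that it is an open map on $\Delta_\calA^\circ$.'' There is therefore nothing substantive inside this paper to compare your proposal against; the authors treat the result as imported.

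That said, your sketch is broadly in the spirit of~\cite{CRA10}, which does reduce to the Craciun--Feinberg criterion of~\cite{Cra05} after an exponential reparametrisation. Two points deserve tightening. First, your coordinates $y_i=\log h_i(x)$ define a map $\Delta_\calA^\circ\to\R^\ell$ whose image is a two-dimensional surface, not a convex open subset of the plane; the reparametrisation actually used identifies $\Delta_\calA^\circ$ with $(\R_{>0})^2$ via the (algebraic) moment map, after which each $\beta_\ba$ becomes, up to a common positive factor, the monomial $t^{\ba}$ and $F_w$ takes the rational-exponential form to which~\cite{Cra05} applies directly on all of $\R^2$. Second, in your necessary direction when clause~(2) fails, arranging for $\det DF_w$ to vanish or change sign does not by itself manufacture two distinct interior preimages; the phrase ``a continuity and perturbation argument'' is hiding a genuine step, and you should expect to need the full strength of the equivalence in~\cite{Cra05} (or a separate topological argument about local degree) rather than a soft perturbation of a Jacobian zero.
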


In~\cite{CRA10}, the authors
incorrectly stated this result as $F_w$
is injective on all of $\Delta_\calA$, even though their proof was only valid for the
interior of the convex hull.
Their proof showed that $F_w$ has no critical points in the interior,
which shows that it is an open map  on $\Delta_\calA^\circ$.

This is the best possible result with these hypotheses:
Consider a bilinear patch where two control points coincide.
Specifically, let $\calA=\{(0,0),(0,1),(1,0),(1,1)\}$ and suppose that the control points
are $\{(0,0),(0,1),(1,0)\}$, where $f(\ba)=\ba$, except that $f(1,1)=(1,0)$.
This assignment of control points is weakly compatible, but $F_w$ collapses
the edge between $(1,0)$ and $(1,1)$ to the point $(1,0)$.
\[
   \begin{picture}(240,54)(0,-5)
    \put(28,0){\includegraphics{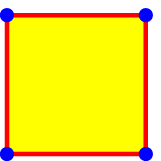}}
      \put(  0,-5){$(0,0)$}\put(  0,40){$(0,1)$}
      \put( 74,-5){$(1,0)$}\put( 74,40){$(1,1)$}
     \put(105,15){$\xrightarrow{\ F_w\ }$}
     \put(168,0){\includegraphics{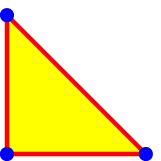}}
      \put(140,-5){$(0,0)$}\put(140,40){$(0,1)$}
      \put(214,-5){$(1,0)$}
   \end{picture}
\]
This example shows that more hypotheses are needed to ensure that $F_w$ is
injective on $\Delta_\calA$, and those hypotheses should imply that faces of
$\Delta_\calA$ are not collapsed.
In fact, this is the only additional hypothesis needed.

\begin{definition}\label{def:stco}
 A choice $f\colon\calA\to\R^2$ of control points is \demph{compatible} if
 it is weakly compatible, and no two vertices have the same image under $f$.
\end{definition}

We state our main result.

\begin{theorem} \label{T:equivalence}
 The map $F_w:\Delta_{\calA}\mapsto \R^2$ is injective for all
 $w\in\R^{\calA}_>$ if and only if the assignment $f\colon\calA\to\R^2$ is
 compatible.
\end{theorem}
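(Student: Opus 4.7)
\emph{Necessity.} Suppose $F_w$ is injective on $\Delta_\calA$ for every $w\in\R_{>}^\calA$. Restricting to the interior, $F_w$ is in particular injective on $\Delta_\calA^\circ$ for every $w$, so $f$ is weakly compatible by Theorem~\ref{thm:clf-injectivity1}. Moreover, the convex hull property gives $F_w(\bb)=f(\bb)$ at every vertex $\bb$ of $\Delta_\calA$; hence if two distinct vertices shared an image under $f$, $F_w$ would collapse them, contradicting injectivity. Thus $f$ is compatible.

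\emph{Sufficiency.} Assume $f$ is compatible and fix $w\in\R_{>}^\calA$. Theorem~\ref{thm:clf-injectivity1} already yields injectivity on $\Delta_\calA^\circ$, so it remains to rule out three kinds of collisions: (i) a self-collision within a single edge $\delta$, (ii) a collision between distinct edges $\delta_1,\delta_2$, and (iii) a collision between the interior and the boundary. The plan is to convert the orientation conditions of compatibility into a geometric separation statement: Lemma~\ref{L:halfspaces} produces, for every edge $\delta$ of $\Delta_\calA$ with endpoints $\bb_0,\bb_1$, a closed halfspace supporting $\conv(f(\calA))$ whose bounding line passes through $f(\bb_0)$ and $f(\bb_1)$. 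From this, Corollary~\ref{C:chull} identifies the vertices of $\conv(f(\calA))$ with the images $f(\bb)$ of the vertices of $\Delta_\calA$ in matching cyclic order, and Corollary~\ref{C:edge} shows that $F_w$ restricted to $\delta$ is an injective curve whose image lies in the halfspace associated to $\delta$, meeting the bounding line only at $f(\bb_0)$ and $f(\bb_1)$.

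\emph{Assembling the pieces and main obstacle.} Case (i) is immediate from Corollary~\ref{C:edge}. For (ii), the halfspaces associated to $\delta_1$ and $\delta_2$ have distinct bounding lines that intersect in at most one point, and if $\delta_1,\delta_2$ share a vertex $\bb$, then that intersection is $f(\bb)$; since each $F_w(\delta_i^\circ)$ is strictly separated from the bounding line of the other halfspace away from $f(\bb)$, and the distinct-vertex hypothesis handles the shared endpoint itself, the images of the open edges stay disjoint. For (iii), the proof of Theorem~\ref{thm:clf-injectivity1} shows $F_w|_{\Delta_\calA^\circ}$ is an open map; combined with Corollary~\ref{C:edge}, the image of $\partial\Delta_\calA$ is a simple closed curve bounding the open region $F_w(\Delta_\calA^\circ)$, so a point on this curve cannot be a second image of any interior point. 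The principal obstacle is Lemma~\ref{L:halfspaces} itself: weak compatibility only constrains triples whose images are affinely independent, so one must handle the case that an interior lattice point of an edge $\delta$ maps into the interior of $\conv(f(\calA))$ rather than lying on the expected supporting line. Extracting the correct supporting line in the presence of such degeneracies, and verifying that it places every non-edge control point strictly on the correct side, is the technical crux of the argument.
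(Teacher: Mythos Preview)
Your necessity argument is fine and matches the paper. The gap is in sufficiency, specifically in what you claim Lemma~\ref{L:halfspaces} and its corollaries deliver.

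You assert that for each edge $\delta$ with endpoints $\bb_0,\bb_1$ the line through $f(\bb_0),f(\bb_1)$ supports $\conv(f(\calA))$, that (your reading of Corollary~\ref{C:chull}) the vertices of $\conv(f(\calA))$ are exactly the images of the vertices of $\Delta_\calA$ in matching cyclic order, and that (your reading of Corollary~\ref{C:edge}) $F_w(\delta)$ stays in that halfspace. All three claims are false. Take $\calA=\{(0,0),(1,0),(2,0),(1,1)\}$, so $\Delta_\calA$ is a triangle with one interior lattice point on its bottom edge $\delta$, and set $f(0,0)=(0,0)$, $f(1,0)=(1,-1)$, $f(2,0)=(2,0)$, $f(1,1)=(1,1)$. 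Every affinely independent triple of $\calA$ maps to a positively oriented triple, so $f$ is compatible; yet the line through $f(0,0)$ and $f(2,0)$ has $f(1,0)=(1,-1)$ strictly on the opposite side from $f(1,1)=(1,1)$, $\conv(f(\calA))$ has four vertices rather than three, and $F_w(\delta)\subset\conv\{(0,0),(1,-1),(2,0)\}$ lies on the ``wrong'' side of that line. Your separation argument for case~(ii) therefore does not go through, and since your case~(iii) argument relies on~(ii) to conclude that $F_w(\partial\Delta_\calA)$ is a simple closed curve, it is also unsupported. Your diagnosis of the ``principal obstacle'' is miscalibrated in the same way: the difficulty is not that an interior lattice point of $\delta$ may map into the interior of $\conv(f(\calA))$, but that it may map \emph{outside} the halfspace you expect and become a vertex of $\conv(f(\calA))$.

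The paper's Lemma~\ref{L:halfspaces} is genuinely different: it takes the intersection of the closed halfspaces determined by \emph{every} pair $f(\bd_i)\neq f(\bd_j)$ of images of lattice points on $\delta$ (not just the endpoints), and places $f(\calA\setminus\delta)$ in that intersection. Corollary~\ref{C:chull} then argues this region is disjoint from the relative interior of $\conv(f(\delta\cap\calA))$, which is where $F_w(\delta^\circ)$ lives, and Corollary~\ref{C:edge} obtains edge injectivity by projecting along an unbounded direction of that region and invoking Theorem~\ref{thm:clf-injectivity1} in dimension one. For case~(iii) the paper does not wait for~(ii): if $x\in\Delta_\calA^\circ$ and $F_w(x)=F_w(y)$ with $x\neq y$, openness of $F_w$ on the interior gives an open $F_w(V)$ for a small $V\ni x$, the preimage of $F_w(V)$ contains a relative neighborhood of $y$ disjoint from $V$, and any interior point of that neighborhood collides with a point of $V$, contradicting Theorem~\ref{thm:clf-injectivity1} directly.
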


If $\ba\in\calA$ is a vertex of $\Delta_\calA$, then $F_w(\ba)=f(\ba)$.
Theorem~\ref{thm:clf-injectivity1}, together with this observation,
shows that if $F_w$ is injective for all $w\in\R^\calA_>$, then $f\colon\calA\to\R^2$ is
compatible.

For the other implication, suppose that $f\colon\calA\to\R^2$ is compatible.
We show that the assumption that $F_w$ is not injective leads to a contradiction.

We first make several observations about the relative positions of the points $f(\ba)$
for $\ba\in\calA$ which are implied by compatibility.
Composing with a reflection of $\R^2$ if necessary, we may assume that  if
$\ba_0,\ba_1,\ba_2$ and $f(\ba_0),f(\ba_1),f(\ba_2)$ are both affinely independent, then
they induce the same orientation on $\R^2$.

Let $\delta$ be an edge of $\Delta_\calA$.
There is some triple of points $\bd,\bd',\ba$ of $\calA$ with $f(\bd),f(\bd'),f(\ba)$
affinely independent where $\bd,\bd'\in\delta$ and $\ba\not\in\delta$.
Indeed, if there are no such triples, then every point of $f(\calA)$ lies on every line
segment between two distinct points of $f(\delta\cap\calA)$, which implies that the points
of $f(\delta\cap\calA)$ are collinear and the line they span contains $f(\calA)$, which
contradicts the first condition for weak compatibility of Definition~\ref{def:co}.
This argument requires that there be at least two distinct points of $f(\delta\cap\calA)$,
which follows as the endpoints of $\delta$ (which are vertices of $\Delta_\calA$) are
mapped to different points under $f$.

Suppose that we list the points $\bd_0,\bd_1,\dotsc,\bd_m$
of $\delta\cap\calA$ so that if $\ba\in\calA\setminus\delta$, and $i<j$, then
$\bd_i,\bd_j,\ba$ are positively oriented.
Then either $f(\bd_i),f(\bd_j),f(\ba)$ are collinear or positively oriented.
Since there must be at least one such triple with $f(\bd_i),f(\bd_j),f(\ba)$ affinely
independent, we deduce the following.

\begin{lemma}\label{L:halfspaces}
 Every control point $f(\calA\setminus\delta)$ lies in the intersection of closed
halfspaces
\[
   \overline{\{x\in\R^2\mid f(\bd_i),f(\bd_j),x\mbox{ are positively oriented}\}}
\]
for $i<j$ with $f(\bd_i)\neq f(\bd_j)$, and this intersection has a nonempty relative
interior.
\end{lemma}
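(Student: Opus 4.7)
My plan is to prove Lemma~\ref{L:halfspaces} in two short steps: first containment of each $f(\ba)$ in each listed closed half-plane, and then nonemptiness of the relative interior of the intersection.

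For containment, I would fix indices $i<j$ with $f(\bd_i)\neq f(\bd_j)$ and an arbitrary $\ba\in\calA\setminus\delta$. By construction of the ordering on $\bd_0,\ldots,\bd_m$ introduced just above the lemma, the triple $\bd_i,\bd_j,\ba$ is positively oriented in $\R^2$; the global consistency of this ordering relies on $\delta$ being an edge of the convex polygon $\Delta_\calA$, so every lattice point of $\calA\setminus\delta$ lies strictly on a single common side of the line through $\delta$. Applying the second clause of weak compatibility (Definition~\ref{def:co}) together with the orientation normalization made at the start of the proof of Theorem~\ref{T:equivalence}, I conclude that whenever $f(\bd_i),f(\bd_j),f(\ba)$ is affinely independent it is positively oriented, placing $f(\ba)$ in the open half-plane $\{x : f(\bd_i),f(\bd_j),x \text{ positively oriented}\}$. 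In the remaining case the image triple is affinely dependent, so $f(\ba)$ lies on the bounding line. Either way, $f(\ba)$ lies in the closed half-plane, as required.

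For the second assertion, I would note that because $\Delta_\calA$ is a 2D polygon it has at least three edges, so $\calA\setminus\delta$ is nonempty. The intersection of the listed half-planes therefore contains at least one point $f(\ba)$ and is a nonempty convex polyhedron; any nonempty convex set has nonempty relative interior in its own affine hull, which gives the stated conclusion. The existence of at least one triple $\bd_i,\bd_j,\ba$ with $f(\bd_i),f(\bd_j),f(\ba)$ affinely independent---already established in the paragraph immediately before the lemma---gives a little more, ensuring that the intersection is not collapsed onto the single line through $f(\bd_i)$ and $f(\bd_j)$; this finer information is what will feed into Corollaries~\ref{C:chull} and~\ref{C:edge}.

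The argument is short, and there is no single technical step I expect to be hard. The main subtlety is bookkeeping: keeping straight which orientation conventions have been fixed (both by weak compatibility itself and by the possible reflection applied earlier in the proof of Theorem~\ref{T:equivalence}), briefly justifying why the linear ordering along $\delta$ is globally consistent from the convex geometry of $\Delta_\calA$, and handling the degenerate case of collinear image triples cleanly. I do not anticipate any genuine obstacle beyond careful exposition of these points.
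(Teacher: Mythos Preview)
Your proposal is correct and follows essentially the same approach as the paper: the paper's justification is the short paragraph immediately preceding the lemma (this being an extended abstract), and your containment argument via the ordering on $\bd_0,\dotsc,\bd_m$ plus weak compatibility under the orientation normalization reproduces it exactly, including the collinear/positively-oriented dichotomy. Your treatment of the nonempty relative interior is slightly more explicit than the paper's one-line appeal to the existence of an affinely independent image triple, but the content is the same.
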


\begin{corollary}\label{C:chull}
 For every edge $\delta$ of $\Delta_\calA$ and every $\bb\in\calA\setminus\delta$, the
 control point $f(\bb)$ does not lie in the relative interior of the convex hull of
 $f(\delta\cap\calA)$.
\end{corollary}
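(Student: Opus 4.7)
The plan is to assume for contradiction that $f(\bb)$ lies in the relative interior of $C := \conv(f(\delta \cap \calA))$ and use Lemma~\ref{L:halfspaces} to exhibit a halfplane that contains $f(\bb)$ yet is disjoint from this relative interior.

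I would first treat the generic case where $C$ is two-dimensional. Enumerate the vertices of $C$ counterclockwise as $f(\bd_{k_1}), \dots, f(\bd_{k_r})$, choosing a single representative index $k_s$ for each vertex. Since $\bd_0$ and $\bd_m$ are distinct vertices of $\Delta_\calA$, compatibility forces $f(\bd_0)\neq f(\bd_m)$, so $r\geq 3$. The indices $k_1,\dots,k_r$ are distinct integers arranged cyclically, so at least one consecutive pair---the one immediately after the maximum index---forms a descent. Writing such a descent as $(q,p)$ with $p<q$ singles out an edge of $C$ that runs counterclockwise from $f(\bd_q)$ to $f(\bd_p)$. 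Traversing this edge counterclockwise keeps $C$ on the left, so $C$ lies to the right of the directed segment $f(\bd_p)\to f(\bd_q)$, i.e., in the closed negative side of the line $\ell$ through $f(\bd_p)$ and $f(\bd_q)$ relative to the orientation induced by $p<q$. Moreover the relative interior of $C$ lies strictly on the open negative side, since $\ell$ meets $C$ only along the edge $f(\bd_p)f(\bd_q)$. But Lemma~\ref{L:halfspaces} places $f(\bb)\in\overline{H_{pq}}$---the closed positive side of $\ell$---and the closed positive and open negative sides of $\ell$ are disjoint, producing the required contradiction.

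The main obstacle is the degenerate case where the $f(\bd_k)$ are collinear on a line $L$ and $C$ is a segment on $L$. Then each of the halfplanes $\overline{H_{pq}}$ is bounded by $L$ itself, and its boundary coincides with the affine hull of the relative interior of $C$, so the separation argument collapses. To close this case I would invoke condition (1) of weak compatibility to produce some $\ba\in\calA\setminus\delta$ with $f(\ba)\notin L$, observe that the image orientations of the triples $(\bd_p,\bb,\ba)$ and $(\bd_q,\bb,\ba)$ have opposite signs (because $f(\bb)$ lies on $L$ strictly between $f(\bd_p)$ and $f(\bd_q)$ while $f(\ba)$ sits off $L$), and then use weak compatibility to match these with the domain orientations, showing that the resulting constraints pin $\ba$ into a wedge of the plane that is disjoint from $\Delta_\calA$ and hence contains no lattice point of $\calA$.
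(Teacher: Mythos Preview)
Your treatment of the two-dimensional case is correct and rather nice: locating a descent in the cyclic sequence of representative indices picks out exactly an edge of $C=\conv(f(\delta\cap\calA))$ for which the halfspace $\overline{H_{pq}}$ from Lemma~\ref{L:halfspaces} points \emph{away} from $C$, so $f(\bb)\in\overline{H_{pq}}$ is pushed off the interior immediately. This is more concrete than the paper's sketch, which argues abstractly that the intersection of halfspaces must be ``either interior or exterior'' to $C$ and then, when it happens to be interior, invokes a second edge $\delta'$ of $\Delta_\calA$ to derive a contradiction from the positions of the points of $\delta\cap\calA$ relative to the halfspace system for $\delta'$.

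The degenerate (collinear) case, however, does not close as you have written it. The two orientation conditions coming from $(\bd_p,\bb,\ba)$ and $(\bd_q,\bb,\ba)$ constrain $\ba$, in the \emph{domain}, to a closed wedge bounded by the lines $\bd_p\bb$ and $\bd_q\bb$; this wedge has its apex at $\bb\in\Delta_\calA\setminus\delta$, so it is never disjoint from $\Delta_\calA$ and typically contains an open piece of it. For a concrete instance take $\Delta_\calA$ the triangle on $(0,0),(3,0),(0,3)$, $\delta$ the bottom edge, $\bb=(1,1)$, $p=0$, $q=3$: then $\ba=(0,2)\in\calA$ satisfies both orientation constraints, so those two triples alone do not force $\ba$ out of $\calA$. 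Two triples are simply not enough information here; you would need to show that \emph{every} lattice point $\ba$ with $f(\ba)\notin L$ can be excluded, for some choice of $p,q$ depending on $\ba$, or bring in further triples.

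The paper's way around this is global rather than local: rather than probing with a single point $\ba$, it passes to a second edge $\delta'$ and applies Lemma~\ref{L:halfspaces} there, so that the entire configuration $f(\delta\cap\calA)$ (now regarded as control points lying in $\calA\setminus\delta'$) must sit inside the halfspace intersection for $\delta'$. That constraint, combined with the assumed collinear arrangement on $L$, is what yields the contradiction. If you want to salvage your argument, this second-edge idea is the missing ingredient.
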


To see this, note that the intersection of halfspaces of Lemma~\ref{L:halfspaces} is
either interior or exterior to the convex hull of $f(\delta\cap\calA)$, and if it is
exterior, then it is separated from the relative interior of the convex hull by a line.
If there is an edge $\delta$ so that this intersection lies in the interior of the convex
hull of  $f(\delta\cap\calA)$, let $\delta'$ be a different edge.
Then the positions of the points of $\delta\cap\calA$ relative to the intersection of
halfspaces for $\delta'$ leads to a contradiction.

\begin{corollary}\label{C:edge}
 If $f\colon\calA\to\R^2$ is compatible, then the restriction of $F_w$ to any edge
 $\delta$ of $\Delta_\calA$ is injective.
\end{corollary}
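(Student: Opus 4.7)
The plan is to exhibit a base point $q$ from which the curve $F_w|_\delta$ rotates strictly monotonically, and to infer injectivity from the resulting strict monotonicity of the angle around $q$.

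For the choice of $q$: by Lemma~\ref{L:halfspaces}, the intersection $I = \bigcap_{i<j,\,f(\bd_i)\neq f(\bd_j)}\overline{H_{ij}}$ has nonempty interior. Fix any $q \in I^\circ$ and set $P_k := f(\bd_k) - q$; the strict halfspace inequalities then rearrange algebraically to $P_i \times P_j > 0$ for $i<j$ with $P_i \neq P_j$, and vertex-distinctness from compatibility gives in particular $P_0 \times P_m > 0$. This forces the angles $\arg(P_k)$ to lie in a common arc of length less than $\pi$ and to be monotone in $k$, hence $q \notin \conv f(\delta \cap \calA)$. Since $F_w(\delta) \subseteq \conv f(\delta \cap \calA)$ by the convex hull property, the angle function $\alpha(t) := \arg(F_w(t) - q)$ is well-defined and continuous on $\delta$.

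For the angular velocity: parameterize $\delta$ linearly by $t \in [0,m]$ and set $N(t) = \sum_k w_{\bd_k}\beta_{\bd_k}(x(t))\,P_k$ and $D(t) = \sum_k w_{\bd_k}\beta_{\bd_k}(x(t))$, so that $F_w(x(t)) - q = N/D$. A standard cancellation yields $(F_w - q) \times F_w' = (N\times N')/D^2$, and antisymmetrizing the double sum produces
\[
 N \times N' \;=\; \sum_{j<k} w_{\bd_j}w_{\bd_k}(P_j \times P_k)\bigl(\beta_{\bd_j}\beta_{\bd_k}' - \beta_{\bd_k}\beta_{\bd_j}'\bigr).
\]
The key ingredient is a Wronskian-type identity: logarithmic differentiation of $\beta_\ba(x) = \prod_i h_i(x)^{h_i(\ba)}$, together with the linearity $h_i(\bd_k) = h_i(\bd_0) + k\gamma_i$ in the lattice index (where $\gamma_i := \nabla h_i \cdot v_0 \in \Z$ and $v_0$ is the primitive lattice direction of $\delta$), yields
\[
 \beta_{\bd_j}\beta_{\bd_k}' - \beta_{\bd_k}\beta_{\bd_j}' \;=\; (k-j)\,\beta_{\bd_j}\beta_{\bd_k}\sum_{i\geq 2}\frac{\gamma_i^2}{h_i(x(t))}.
\]
On the interior of $\delta$ every $h_i$ (for $i \geq 2$) is positive, and at least one $\gamma_i$ is nonzero (e.g., for an edge meeting $\delta$ at a vertex), so the bracketed expression is strictly positive whenever $j<k$.

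Combining the pieces, every summand in $N \times N'$ is nonnegative and the $j=0,\,k=m$ summand is strictly positive; hence $\alpha'(t) > 0$ on the interior of $\delta$, and by continuity at the endpoints $\alpha$ is strictly increasing on all of $\delta$. Distinct parameters therefore yield distinct angles from $q$ and so distinct images, establishing injectivity of $F_w|_\delta$. The principal technical obstacle is the Wronskian identity itself: once expressed via logarithmic derivatives its positivity is transparent, but tracking the linear dependence of $h_i(\bd_k)$ on $k$ and organizing the cancellation requires care. A secondary geometric point is verifying that $q \in I^\circ$ lies outside $\conv f(\delta \cap \calA)$ so that $\alpha$ is defined on all of $\delta$; this follows cleanly from the $<\pi$ angle-span conclusion of the first step.
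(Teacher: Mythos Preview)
Your argument is correct, but it takes a genuinely different route from the paper's. Both proofs begin with Lemma~\ref{L:halfspaces} and the exterior position of the intersection established in Corollary~\ref{C:chull}, but they diverge in how they reduce to a one-dimensional monotonicity statement. The paper chooses a \emph{linear} projection $\pi\colon\R^2\to\R$ along an unbounded direction of the polyhedron $P=I$, observes that $\pi\circ f$ is a weakly compatible one-dimensional assignment on $\delta\cap\calA$, and then invokes Theorem~\ref{thm:clf-injectivity1} (in dimension one) as a black box to obtain injectivity of $\pi\circ F_w|_\delta$, hence of $F_w|_\delta$. You instead choose an interior point $q\in I^\circ$ as a base and use the \emph{angular} coordinate about $q$; the cross-product/Wronskian computation then gives strict monotonicity of the angle directly, without any appeal to Theorem~\ref{thm:clf-injectivity1}. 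Your approach is longer but entirely self-contained, and the Wronskian identity you derive makes the positivity transparent; the paper's approach is shorter but leans on the cited injectivity theorem.

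One small notational slip: the points $\bd_0,\dotsc,\bd_m$ of $\delta\cap\calA$ need not be \emph{consecutive} lattice points along $\delta$, so the formula $h_i(\bd_k)=h_i(\bd_0)+k\gamma_i$ should read $h_i(\bd_k)=h_i(\bd_0)+t_k\gamma_i$, where $t_k$ is the lattice position of $\bd_k$ along $\delta$. This changes $(k-j)$ to $(t_k-t_j)$ in the Wronskian, which is still positive for $j<k$ since the $\bd_k$ are listed in order, so nothing in the argument is affected.
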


To see this, fix an edge $\delta$ and consider the intersection of halfspaces of
Lemma~\ref{L:halfspaces}.
This intersection is exterior to the convex hull of $f(\delta\cap\calA)$ and so consists
of an unbounded polyhedron, $P$.
Consider the orthogonal projection $\pi\colon\R^2\to\R$ along an unbounded direction of $P$.
Then the map $\pi\circ f\colon \delta\cap\calA\to \R$ is a weakly compatible choice of
control points for $\delta\cap\calA$, and so the map $\pi\circ F_w$ restricted to the edge
$\delta$ is injective, by Theorem~\ref{thm:clf-injectivity1}.
But this implies that the restriction of $F_w$ to $\delta$ is injective.

\begin{proof}[Proof of Theorem~\ref{T:equivalence}]
 We suppose that $f\colon\calA\to\R^2$ is compatible and that $F_w$ is not injective.
 Let $x,y\in\Delta_\calA$ be distinct points with $F_w(x)=F_w(y)$.

 First, neither $x$ nor $y$ can be a point of $\Delta^\circ_\calA$.
 To see this, suppose that $x\in\Delta^\circ_\calA$ and let $V$ be a neighborhood of $x$
 in $\Delta_\calA$ whose closure does not contain $y$.
 Then $F_w(V)$ is an open set containing $F_w(x)=F_w(y)$, so
 $F_w^{-1}(V)\setminus V$ contains an open subset $U$ of $y$ in $\Delta_\calA$.
 But then points of $U\cap\Delta_\calA^\circ$ are mapped by $F_w$ to points of
 $F_w(V)$, and so $F_w$ is not injective on the interior of $\Delta$, which contradicts
 Theorem~\ref{thm:clf-injectivity1}, as the choice $f$ of control points is weakly compatible.

 Thus $x$ and $y$ are points of some edges of $\Delta_\calA$.
 They cannot be points of the same edge $\delta$, for then the restriction of $F_w$ to
 $\delta$ is not injective, contradicting Corollary~\ref{C:edge}.
 Thus they are points of different edges, $x\in\delta$ and $y\in\delta'$
 with $\delta\neq\delta'$.
 We cannot have one of them be an interior point of its edge, for then
 the relative interiors of the convex hulls of
 $f(\delta\cap\calA)$ and $f(\delta'\cap\calA)$ meet, contradicting
 Corollary~\ref{C:chull}.

 The only possibility left is that $x$ and $y$ are vertices of $\Delta$, but then
 $F_w(x)=f(x)$ and $F_w(y)=f(y)$, which are different, as the choice $f$ was compatible.
\end{proof}

\begin{remark}

   By definition, to check weak compatibility for 2D patches, it suffices to check
   determinants for each triple of points of $\calA$
   and the corresponding control points, giving a
   simple $(\#(\calA))^3$ algorithm.
   The complexity may be reduced if we start from a triangulation of $\Delta_\calA$, or
   with careful bookkeeping.
   Such triangulations can be obtained from control nets for tensor
   product patches or B\'ezier triangles.
   We will treat the complexity of checking weak compatibility in the complete version of
   this extended abstract.

 Mapping functions that are weakly compatible exist; for example the identity
  assignment of control points is weakly compatible.
  A designer may choose weakly
  compatible control points for aesthetic or other reasons.
  For example, if only a few control points are moved such as in
  image warping, morphing, or reparameterization, then the control points may be
  weakly compatible by design, or else only a few determinants need to be computed.

 For any triangulation of $\calA$,  the assignment of control points induces a piecewise
 linear map to the image.
 This piecewise linear map is injective (except possibly collapsing
 an interior simplex) for every such triangulation if and only if the assignment of
 control points is weakly compatible.
\end{remark}

\section{Conclusions}
In this paper, we study the injectivity of toric B\'ezier patch
geometrically. We present a simple condition on a set of control
points which implies that the resulting 2D toric B\'ezier patch is
injective, for any choice of positive weights. For higher dimension,
the best result remains Theorem~\ref{thm:clf-injectivity1} by
Craciun et al. in \cite{CRA10} (Theorem~3.5 in \cite{CRA10}). We
plan to continue this investigation of injectivity for 3D and higher
dimensions in a future publication.

\section*{Acknowledgements}
 The authors thanks to Tim Goodman and Keith Unsworth for providing
their paper \cite{Goo96}. Research of Sottile is supported in part
by NSF grant DMS-1001615, and the Institut Mittag-Leffler,
Djursholm, Sweden. Research of Zhu is supported by the NSFC (Grant
Nos. 10801024, 11071031, and U0935004), the Fundamental Research
Funds for the Central Universities (DUT10ZD112, DUT11LK34), and the
National Engineering Research Center of Digital Life, Guangzhou
510006, China.


\begin{thebibliography}{1}

\bibitem{Cho00}Y. Choi and S. Lee,  ``Injectivity conditions of 2D and 3D uniform cubic B-Spline functions," Graphical Models, vol. 62, 2000, pp. 411-427.

\bibitem{CRA10}G. Craciun, L. Garc\'ia-Puente and F. Sottile, ``Some geometrical aspects
  of control points for toric patches," in: Mathematical Methods for Curves and Surfaces,
  M. D\"ahlen
et al. Eds,  Lecture Notes in Computer Science, vol. 5862, Springer,
Heidelberg, 2008, pp. 111-135.

\bibitem{Cra05}G. Craciun and M. Feinberg, ``Multiple equilibria in complex
  chemical reaction networks I. The injectivity property," SIAM Journal on
  Applied Mathemathics, vol. 65, 2005, pp. 1526-1546.

\bibitem{DeRose}T. DeRose, R. Goldman, H. Hagen, and S. Mann, ``Functional composition algorithms via blossoming," ACM Transactions on
Graphics, vol. 12, 1993, pp. 113¨C135.

\bibitem{farin}G. Farin, Curves and Surfaces for Computer Aided Geometric Design, Computer Science and
Scientific Computing, San Diego: Academic Press Inc., 1997.

\bibitem{Flo03}M. Floater, ``One-to-one piecewise linear mappings over triangulations,"
Mathematics of Computation, vol. 72, 2003, pp. 685-696.

\bibitem{Goo96}T. Goodman and K. Unsworth, ``Injective bivariate maps," Annals of Numerical Mathematics, vol. 3, 1996, pp. 91-104.

\bibitem{KRA02}R. Krasauskas,  ``Toric surface patches," Advances in Computational
  Mathematics, vol. 17, 2002, pp. 89-113.


\bibitem{Sot03}F. Sottile, ``Toric ideals, real toric varieties, and the moment
map," in: Topics in Algebraic Geometry and Geometric Modeling,
Contemp. Math., vol. 334, Amer. Math. Soc., Providence, RI, 2003,
pp. 225-240.

\bibitem{War92}J.  Warren, ``Creating multisided rational B\'ezier surfaces using base points," ACM Transactions on Graphics, vol. 11, 1992, pp. 127-139.

\bibitem{Wol90}G. Wolberg, Digital Image Warping, Los Alamitos: IEEE Computer Society Press, CA, 1990.

\end{thebibliography}
\end{document}